\documentclass[conference,a4paper]{IEEEtran}

\usepackage[english]{babel}
\usepackage[T1]{fontenc}
\usepackage[utf8]{inputenc}
\usepackage{subfig}
\usepackage{multirow}
\usepackage[font = scriptsize, labelfont = bf, labelsep = period, tableposition = top, singlelinecheck = false, justification = centering]{caption}
\usepackage{cite}
\usepackage{placeins}
\usepackage{amssymb}
\usepackage{amsmath}
\usepackage{amsthm}
\usepackage{latexsym}
\usepackage{wrapfig}
\usepackage{float}
\usepackage{mathtools}
\usepackage{bm}
\usepackage{listings}
\usepackage{textcomp}
\usepackage{setspace}
\usepackage[dvipsnames]{xcolor}
\usepackage{etoolbox}
\usepackage{hyperref}

\usepackage{algorithm,algorithmic}


\newtheorem{theo}{Theorem}[section]

\newtheorem{lem}[theo]{Lemma}
\theoremstyle{definition}

\newtheorem{rem}[theo]{Remark}





\newcommand{\U}{{\mathbb U}}
\newcommand{\F}{{\mathbb F}}

\newcommand{\C}{{\mathbb C}}

\newcommand{\cC}{{\mathcal C}}

\newcommand{\cG}{{\mathcal G}}
\newcommand{\cH}{{\mathcal H}}

\newcommand{\cP}{{\mathcal P}}

\newcommand{\cI}{{\mathcal I}}

\renewcommand{\cH}{{\mathcal H}}
\newcommand{\cS}{{\mathcal S}}
\newcommand{\cN}{{\mathcal N}}

\newcommand{\cW}{{\mathcal W}}

\newcommand{\bP}{{\mathbf P}}
\newcommand{\bH}{{\mathbf H}}
\newcommand{\bZ}{{\mathbf Z}}

\newcommand{\wcI}{\widetilde{\cI}}
\newcommand{\HI}{\bH_{\cI}}
\newcommand{\PI}{\bP_{\scriptsize\cI}}
\newcommand{\gf}{\mbox{$\bG_{\scriptsize{\bF}}$}}
\newcommand{\wb}{\bw_{\scriptsize{{\bf b}}}}
\newcommand{\HW}{\cH\cW_{\!N}}
\newcommand{\Fm}{\F_2^{2m}}
\newcommand{\wHI}{\widetilde{\bH_{\cI}}}

\newcommand{\sbt}{\raisebox{.2ex}{\mbox{$\scriptscriptstyle\bullet\,$}}}


\newcommand{\Sp}{\mathrm{Sp}}

\newcommand{\rk}{\mbox{${\rm rank\,}$}}

\newcommand{\cl}{\mbox{${\rm Cliff\!}$}}

\newcommand{\cs}{\mbox{\rm cs}\,}

\newcommand{\wt}{{\rm wt}}

\newcommand{\T}{\mbox{$^{\sf T}$}}

\newcommand{\inners}[2]{\mbox{$\langle{\,{#1}\,}|\,{#2}\,\rangle_{\rm s}$}}

\newcommand{\ov}[1]{\mbox{$\overline{#1}$}}
\newcommand{\FDP}{\bF_D(\bP)}
\newcommand{\FUS}{\bF_U(\bS)}
\newcommand{\GL}{\mathrm{GL}}
\newcommand{\Sym}{\mathrm{Sym}}

\newcommand{\diag}{\textup{diag}}

\newcommand{\bb}[1]{\mbox{\rm $\textbf{#1}$}}

\newcommand{\Imr}{\bI_{m|r}}
\newcommand{\Imrr}{\bI_{m|-r}}


\newcommand{\bA}{{\mathbf A}}

\newcommand{\bD}{{\mathbf D}}
\newcommand{\bE}{{\mathbf E}}
\newcommand{\bF}{{\mathbf F}}
\newcommand{\bG}{{\mathbf G}}
\newcommand{\bI}{{\mathbf I}}

\newcommand{\bS}{{\mathbf S}}
\newcommand{\ba}{{\mathbf a}}
\newcommand{\bc}{{\mathbf c}}
\newcommand{\bd}{{\mathbf d}}
\newcommand{\be}{{\mathbf e}}

\newcommand{\bs}{{\mathbf s}}

\newcommand{\bu}{{\mathbf u}}
\newcommand{\bv}{{\mathbf v}}
\newcommand{\bw}{{\mathbf w}}
\newcommand{\bx}{{\mathbf x}}
\newcommand{\by}{{\mathbf y}}
\newcommand{\bz}{{\mathbf z}}
\newcommand{\twomat}[2]{\mbox{$\left[\!\!\begin{array}{cc}{#1}&{#2}\end{array}\!\!\right]$}}
\newcommand{\fourmat}[4]{\mbox{$\left[\!\!\begin{array}{cc}{#1}&{#2}\\{#3}&{#4}\end{array}\!\!\right]$}}
\newcommand{\twomatv}[2]{\mbox{$\left[\!\!\begin{array}{c}{#1}\\{#2}\end{array}\!\!\right]$}}

\newcounter{alp}
\newcounter{ara}
\newcounter{rom}


\newif\ifcomment
\commenttrue

\sloppy

\begin{document}
\title{Reconstruction of Multi-user Binary Subspace Chirps}

\author{
  \IEEEauthorblockN{Tefjol Pllaha\IEEEauthorrefmark{1}, Olav Tirkkonen\IEEEauthorrefmark{1}, Robert Calderbank\IEEEauthorrefmark{2}}
	
 \IEEEauthorblockA{\small \IEEEauthorrefmark{1}Aalto University, Finland, e-mails: \{tefjol.pllaha, olav.tirkkonen\}@aalto.fi
    }
 \IEEEauthorblockA{\small \IEEEauthorrefmark{2}Duke University, NC, USA, e-mail: robert.calderbank@duke.edu
    }
}

\maketitle

\begin{abstract} 
We consider codebooks of Complex Grassmannian Lines consisting of Binary Subspace Chirps (BSSCs) in $N = 2^m$ dimensions. BSSCs are generalizations of Binary Chirps (BCs), their entries are either fourth-roots of unity, or zero. BSSCs consist of a BC in a non-zero subspace, described by an on-off pattern. Exploring the underlying binary symplectic geometry, we provide a unified framework for BSSC reconstruction---both on-off pattern and BC identification are related to stabilizer states of the underlying Heisenberg-Weyl algebra. In a multi-user random access scenario we show feasibility of reliable reconstruction of multiple simultaneously transmitted BSSCs with low complexity.  
\end{abstract}




\section{Introduction}
Codebooks of complex projective (Grassmann) lines, or tight frames,
have applications in multiple problems of interest for communications
and information processing, such as code division multiple access
sequence design~\cite{Viswanath1999}, precoding for multi-antenna
transmissions~\cite{Love2003} and network coding~\cite{Kotter2008}.
Contemporary interest in such codes arise, e.g., from deterministic
compressed sensing~\cite{DeVore2007,HCS08,Li2014,Wang2018,TC18}, virtual
full-duplex communication~\cite{GZ10}, mmWave
communication~\cite{Tsai2018}, and random
access~\cite{Calderbank2019}. In this paper, the main motivation will
come from a random access scenario, in particular from a {\it Massive
  Machine Type Communication} (MTC) scenario~\cite{Osseiran2016},
where the number of potentially accessing users may be extremely high, while
a majority of devices may be stationary. In such scenarios, encoding
and decoding complexity is of particular interest. To limit complexity
and power consumption for MTC devices, it is important that a limited
alphabet with small power variation is applied for transmission. 
Low decoding complexity is important for receiver implementation;
complexity should not grow as a function of the number of codewords.

Codebooks of Binary Chirps (BCs)~\cite{HCS08} provide an algebraically
determined set of Grassmannian line codebooks in $N=2^m$ dimensions,
with desirable properties; all entries are fourth root of unity and the
minimum distance is $1/\sqrt{2}$. The number of codewords is
reasonably large, growing as $2^{m(m+3)/2}$, while single-user
decoding complexity is ${\cal O}(N\log^2 N)$. Recently in~\cite{TC19}, we
expanded the set of Binary Chirps to Binary Subspace Chirps (BSSCs).
Taking the underlying binary symplectic geometry fully into account,
complex Grassmannian line codebooks are created with entries being either
scaled fourth-roots of unity, or zero. Comparing to BCs, the minimum
distance remains $1/\sqrt{2}$, the number of codewords is $\approx
2.38$ times larger, and a single-user decoder with complexity
${\cal O}(N\log^3 N)$ is provided.

In this paper, we expand on~\cite{TC19}. Based on the underlying
binary symplectic geometry, we provide a systematic way of looking at
the reconstruction algorithm by making use of \emph{stabilizer states}~\cite{DM03} and related notions in quantum computation. This combines the
binary subspace reconstruction discussed in~\cite{TC19} and the BC
reconstruction algorithm of~\cite{HCS08} under the same algebraic framework. Furthermore, we investigate BSSC decoding in true random
access scenarios, where there are multiple randomly selected users
simultaneously accessing the channel. We provide a compressive
sensing multi-user detection algorithm for $L$ simultaneously accessing
randomly selected users with complexity ${\cal O}(N\log^{2+L} N)$. We
find numerically that in a scenario where the channels of the randomly
accessing users come from a continuous complex valued fading
distribution, this multi-BSSC reconstruction algorithm is capable of
reliable multi-user detection.


\section{Preliminaries}
\subsection{The Binary Grassmannian $\cG(m,r;2)$}\label{S-SC}
A binary subspace $H \in \cG(m,r;2)$ is the column space of some matrix $\bH_\cI$ in \emph{column reduced echelon form}, where $\cI \subset \{1,\ldots, m\}$ records the \emph{leading positions}. The dual subspace of $H$ in $\cG(m,m-r;2)$ is the column space of  $\widetilde{\bH_\cI}$, with $(\HI)\T \wHI = 0$. By $\bI_{\cI}$ we will denote the $m\times r$ consisting of the $r$ columns of the identity matrix indexed by $\cI$. Put $\widetilde{\cI}:= \{1,\ldots,m\}\setminus \cI$. Then,
\begin{equation}
\begin{array}{ccc}\label{e-PI1}
(\bI_{\cI})\T\HI = \bI_r, & (\bI_{\cI})\T\bI_{\wcI} = 0, & \wHI\bI_{\wcI} = \bI_{m-r},
\end{array}
\end{equation}
and $\HI$ can be completed to an invertible matrix
\begin{equation}\label{e-PI}
    \PI:=\twomat{\HI}{\bI_{\wcI}}\hspace{.0001 in}\in \GL(m;2).
\end{equation}
The transposed inverse is given by
\begin{equation}\label{e-PIT}
    \PI^{-\sf T} = \twomat{\bI_{\cI}}{\widetilde{\HI}}.
\end{equation}

\subsection{Bruhat Decomposition of the Symplectic Group}
We first briefly describe the symplectic structure of $\Fm$ via the symplectic bilinear form
\begin{equation}\label{e-sinner}
    \inners{\ba,{\bf b}}{\bc,\bd}:= {\bf b}\T\bc+\ba\T\bd.
\end{equation}
A $2m\times 2m$ matrix $\bF$ preserves $\inners{\sbt}{\sbt}$ iff $\bF\bb{$\Omega$}\bF\T = \bb{$\Omega$}$ where 
\begin{equation}
    \Omega = \fourmat{{\bf 0}_m}{\bI_m}{\bI_m}{{\bf 0}_m}.
\end{equation}
We will denote the group of all such \emph{symplectic matrices} $\bF$ with $\Sp(2m;2)$. To proceed, we use the \emph{Bruhat decomposition} of $\Sp(2m;2)$~\cite{Rao93}. For $\bP\in \GL(m;2)$ and $\bS\in \Sym(m;2)$ we distinguish two types of elements in $\Sp(2m;2)$: 
\begin{equation}
    \FDP = \fourmat{\bP}{{\bf 0}_m}{{\bf 0}_m}{\bP^{-\sf T}} \text{ and } \FUS = \fourmat{\bI_m}{\bS}{{\bf 0}_m}{\bI_m}.
\end{equation}
Then every $\bF\in \Sp(2m;2)$ can be written as
\begin{equation}\label{e-Bruhat1}
    \bF = \bF_D(\bP_1)\bF_U(\bS_1)\bF_{\Omega}(r)\bF_U(\bS_2)\bF_D(\bP_2),
\end{equation}
where 
\begin{equation}
    \bF_{\Omega}(r) = \fourmat{\Imrr}{\Imr}{\Imr}{\Imrr},
\end{equation}
with $\Imr$ being the block matrix with $\bI_r$ in upper-left corner and 0 else, and $\Imrr = \bI_m - \Imr$.
We are interested in the right cosets in the quotient group $\Sp(2m;2)/\cP$, where $\cP$ is the subgroup generated by products $\FDP\FUS$. It follows that a coset representative will look like
\begin{equation}\label{e-generic}
    \bF_D(\bP)\bF_U(\bS)\bF_{\Omega}(r),
\end{equation}
for some rank $r$, invertible $\bP$, and symmetric $\bS$. However, two different invertibles $\bP$ may yield representatives of the same coset. We make this precise below.
\begin{lem}[\hspace{-0.001 in}\cite{TC19}]\label{lem}
A right coset in $\Sp(2m;2)/\cP$ is uniquely characterized by a rank $r$, a $m\times m$ symmetric matrix $\widetilde{\bS}_r$ that has $\bS_r\in \Sym(r)$ in its upper-left corner and zero else, and an $r$-dimensional subspace $H$ in $\F_2^m$.
\end{lem}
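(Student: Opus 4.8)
The plan is to show that the assignment sending a coset to the triple $(r,\widetilde{\bS}_r,H)$ is a well-defined bijection, by first exhibiting $r$ and $H$ as honest coset invariants and then pinning down $\widetilde{\bS}_r$ after a canonical choice of representative. The conceptual picture is that $\cP$ is the stabilizer of the Lagrangian $\{(\bx,{\bf 0}):\bx\in\F_2^m\}$, so $\Sp(2m;2)/\cP$ is the Lagrangian Grassmannian and the triple is just its parametrization by a support subspace together with a symmetric form. Concretely, $\cP$ consists exactly of the symplectic matrices with vanishing lower-left block (the block upper-triangular elements $\bF_D(\bP)\bF_U(\bS)$), so two normal forms $\bg=\bF_D(\bP)\bF_U(\bS)\bF_{\Omega}(r)$ and $\bg'=\bF_D(\bP')\bF_U(\bS')\bF_{\Omega}(r)$ of the same rank determine the same coset if and only if $\bg^{-1}\bg'\in\cP$, i.e.\ if and only if the lower-left $m\times m$ block of $\bg^{-1}\bg'$ vanishes. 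Existence of a representative of the displayed form is already guaranteed by the Bruhat decomposition~\eqref{e-Bruhat1} together with $\bF_U(\bS_2)\bF_D(\bP_2)\in\cP$, so only well-definedness, injectivity, and surjectivity onto triples remain.

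First I would record the two easy invariants. A direct multiplication gives that the lower-left block of $\bg$ equals $\bP\TT\Imr$; since $\bP\TT$ is invertible this has rank $r$, and right multiplication by an element $\fourmat{\bA}{\bB}{{\bf 0}_m}{\bA\TT}$ of $\cP$ replaces it by $\bP\TT\Imr\bA$, which has the same rank and the same column space. Hence the rank $r$ is a coset invariant, and so is the $r$-dimensional subspace $H:=\bP\TT\langle e_1,\dots,e_r\rangle\subseteq\F_2^m$, the column space of that block. This identifies the first and third entries of the triple; I would represent $H$ canonically by the column reduced echelon matrix $\bH_\cI$ of Section~\ref{S-SC}, the choice that later removes a gauge ambiguity.

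The core computation is the coincidence condition. Using $\bF_{\Omega}(r)^2=\bI_{2m}$, $\bF_U(\bS)^{-1}=\bF_U(\bS)$, and $\bF_D(\bP)^{-1}\bF_D(\bP')=\bF_D(\bR)$ with $\bR:=\bP^{-1}\bP'$, one gets
\begin{equation*}
\bg^{-1}\bg' = \bF_{\Omega}(r)\,\bF_U(\bS)\,\bF_D(\bR)\,\bF_U(\bS')\,\bF_{\Omega}(r).
\end{equation*}
Expanding the lower-left block in blocks adapted to the partition $\{1,\dots,r\}\cup\{r+1,\dots,m\}$, the condition that it vanish splits into two pieces: (i) $\bR$ is block lower triangular for this partition, and (ii) the leading $r\times r$ block of $\bR\bS'+\bS\bR\TT$ vanishes. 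Piece (i) is precisely the statement that $\bR\T$ preserves $\langle e_1,\dots,e_r\rangle$, equivalently $H=H'$. Under (i) the leading block of $\bR\TT$ is the inverse transpose of the leading block $\bR_r$ of $\bR$, so in characteristic two piece (ii) collapses to the congruence $\bS_r=\bR_r\,\bS'_r\,\bR_r\T$, where $\bS_r,\bS'_r$ are the leading $r\times r$ blocks of $\bS,\bS'$.

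Finally I would turn this congruence into an equality by fixing the gauge. The residual freedom carried by $\bR_r\in\GL(r;2)$ is exactly the freedom in choosing $\bP$ for a fixed $H$; committing to the canonical echelon representative of $H$ forces $\bP=\bP'$, hence $\bR=\bI_m$ and $\bR_r=\bI_r$, and the congruence becomes $\bS_r=\bS'_r$. Since only the leading block enters the condition, I would also normalize $\bS$ to $\widetilde{\bS}_r$ (its leading block padded by zeros), making the representative $\bF_D(\bP)\bF_U(\widetilde{\bS}_r)\bF_{\Omega}(r)$ unique per coset, with $\widetilde{\bS}_r$ symmetric because $\bS$ is. Conversely any triple $(r,\widetilde{\bS}_r,H)$ is realized by choosing $\bP$ with $\bP\TT\langle e_1,\dots,e_r\rangle=H$ and reading off this normal form, giving surjectivity. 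The one genuinely delicate point, which I would treat most carefully, is (ii): recognizing that $\bS_r$ is only a congruence invariant until the basis of $H$ is rigidified, and that the column reduced echelon convention is exactly what upgrades $\widetilde{\bS}_r$ to a bona fide function of the coset.
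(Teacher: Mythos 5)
First, a structural note: this paper never proves the lemma---it is quoted from~\cite{TC19}---so there is no in-text proof to compare yours against. Your direct computation in Bruhat coordinates is the natural argument, and its core checks out. Specifically: $\cP$ is exactly the set of symplectic matrices with vanishing lower-left block; the lower-left block of $\bg=\bF_D(\bP)\bF_U(\bS)\bF_{\Omega}(r)$ is $\bP\TT\Imr$, and right multiplication by $\cP$ preserves its rank and column space, so $r$ and $H=\bP\TT\langle\be_1,\dots,\be_r\rangle$ are honest invariants of the coset $\bg\cP$; your factorization $\bg^{-1}\bg'=\bF_{\Omega}(r)\bF_U(\bS)\bF_D(\bR)\bF_U(\bS')\bF_{\Omega}(r)$ with $\bR=\bP^{-1}\bP'$ is correct; and the vanishing of its lower-left block is equivalent to your (i) and (ii) (the block actually splits into three pieces, but the third one, the lower-left block of $\bR\TT$, vanishes iff (i) holds, so nothing is lost), with (ii) collapsing to the congruence $\bS_r=\bR_r\bS'_r\bR_r\T$ exactly as you say.

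The one genuine misstep is the gauge-fixing claim ``committing to the canonical echelon representative of $H$ forces $\bP=\bP'$, hence $\bR=\bI_m$.'' This is false: the echelon convention fixes only the first $r$ columns of $\bP\TT$ (the $\bH_{\cI}$ block), while the completion---the $\bI_{\wcI}$ part in~\eqref{e-PI}---remains free, so a coset contains many ``canonical'' representatives with distinct $\bP$. The repair is immediate from your own setup. Since $H$ is a coset invariant and the echelon matrix of a subspace is unique, two canonical representatives of one coset have the first $r$ columns of $\bP\TT$ and of $\bP'\TT$ both equal to $\bH_{\cI}$. Then $\bR\TT=\bP\T\bP'\TT$ has its first $r$ columns equal to those of $\bP\T\bP\TT=\bI_m$, so $\bR\TT$ is block upper triangular with leading block $\bI_r$, i.e., $\bR_r=\bI_r$. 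Since only $\bR_r$ enters your congruence, (ii) still yields $\bS_r=\bS'_r$. So replace ``hence $\bR=\bI_m$'' by ``hence $\bR_r=\bI_r$'' and the proof closes.

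Finally, your insistence on reading $H$ off $\bP\TT$ rather than off $\bP$ is not a stylistic choice but the only correct one for cosets $\bg\cP$ with representatives~\eqref{e-generic}. Indeed, for $m=2$, $r=1$, $\bS=\bS'={\bf 0}$, $\bP=\bI_2$, $\bP'=\fourmat{1}{0}{1}{1}$, one computes $\bg^{-1}\bg'=\bF_U\!\left(\fourmat{0}{1}{1}{0}\right)\in\cP$: the two representatives lie in the same coset although the spans of the first columns of $\bP$ and $\bP'$ differ. This means the paper's own labelling around~\eqref{e-can}---where $\PI$ ``describes'' $H$ through its $\bH_{\cI}$ block---is literally consistent not with~\eqref{e-generic} and cosets $\bg\cP$, but with the reversed-order representative $\bF_{\Omega}(r)\bF_U(\bS)\bF_D(\bP\T)$ (equivalently, cosets $\cP\bg$) that the paper actually uses in Theorem~\ref{T-ES}. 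Your parametrization and the paper's are related by $\bP\leftrightarrow\bP\TT$; both prove the lemma as stated, but it is worth flagging explicitly which convention you are using.
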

We will use the coset representative
\begin{equation}\label{e-can}
    \bF_O(\bP_{\cI}, \bS_r):= \bF_D(\bP_{\cI})\bF_U(\widetilde{\bS_r})\bF_{\Omega}(r),
\end{equation}
where $\bP_{\cI}$ as in~\eqref{e-PI} describes $H$.
\subsection{The Heisenberg-Weyl Group}
Fix $N = 2^m$, and let $\{\be_0,\be_1\}$ be the standard basis of $\C^2$. For $\bv\in \F_2^m$ set $\be_{\bv}:=\be_{v_1}\otimes\cdots\otimes \be_{v_m}$. Then $\{\be_{\bv}\mid \bv\in \F_2^m\}$ is the standard basis of $\C^N$. The \emph{Pauli matrices} are
\begin{equation*}
\begin{array}{cccc}
\bI_2, & \!\!\sigma_x = \fourmat{0}{1}{1}{0}\mbox, & \!\!\sigma_z = \fourmat{1}{0}{0}{-1}\mbox, & \!\!\sigma_y = i\sigma_x\sigma_z.
\end{array}
\end{equation*}
For $\ba,{\bf b}\in \F_2^m$ put 
\begin{equation}
    \bD(\ba,{\bf b}):=\sigma_x^{a_1}\sigma_z^{b_1}\otimes\cdots\otimes \sigma_x^{a_m}\sigma_z^{b_m}.
\end{equation}
Directly by definition we have
\begin{equation}\label{e-mult}
    \bD(\ba,{\bf b})\bD(\bc,\bd) = (-1)^{\scriptsize{\bf b}\T\bc}\bD(\ba+\bc,{\bf b}+\bd),
\end{equation}
which in turn implies that $\bD(\ba,{\bf b})$ and $\bD(\bc,\bd)$ commute iff $\inners{\ba,{\bf b}}{\bc,\bd} = 0$.
The \emph{Heisenberg-Weyl} group is defined as
\[
    \cH\cW_N := \{i^k\bD(\ba,{\bf b})\mid \ba,{\bf b}\in \F_2^m, k = 0,1,2,3\} \subset \U(N).
\]
 We will call its elements Pauli matrices as well. 
 Let $\bA$ and ${\bf B}$ be $r\times m$ matrices such $[\bA\,\,\,\,{\bf B}]$ is full rank. We will write
\begin{equation}
  \bE(\bA,\,{\bf B}) : = \{\bE( \bx\T\bA,\, \bx\T{\bf B})\mid  \bx\in \F_2^r\},
\end{equation}
where $\bE(\ba,\,{\bf b}):=i^{\scriptsize \ba\T{\bf b}}\bD(\ba,{\bf b}).$ Here we view the binary vectors as integer vectors and the exponent is taken modulo 4. It follows that
\begin{equation}\label{e-Eab1}
    \bE(\ba,{\bf b}) = i^{\scriptsize \ba\T{\bf b}}\sum_{\scriptsize\bv\in\F_2^m}(-1)^{\scriptsize \bv\T{\bf b}}\be_{\bv+\ba}\be_{\bv}\!\!\!\T.
\end{equation}

Let $\cS = \bE(\bA,\,{\bf B}) \subset \HW$ be a \emph{maximal stabilizer}, that is, a subgroup of $N$ commuting Pauli matrices that does not contain $-\bI_N$, and put 
\begin{equation}\label{e-state}
    V(\cS) := \{\bv\in \C^N\mid \bE\bv=\bv, ~\forall\,\bE \in \cS\}.
\end{equation}
It is well-known (see, e.g.,~\cite{NC00}) that $\dim V(\cS) = 1$. A unit vector that generates it is called \emph{stabilizer state}, and with a slight abuse of notation is also denoted by $V(\cS)$. Because we are disregarding scalars, it is beneficial to think of a stabilizer state as a \emph{Grassmannian line}, that is, $V(\cS)\in \cG(\C^N,1)$.

\section{Clifford Group}
The Clifford group in $N$ dimensions is defined to be the normalizer of $\HW$ in the unitary group $\U(N)$ modulo $\U(1)$: 
\[
    \cl_N=\{\bG\in \U(N)\mid \bG\HW\bG^\dagger = \HW\}/\U(1).
\]
Let $\{\be_1,\ldots,\be_{2m}\}$ be the standard basis of $\Fm$, and consider $\bG\in \cl_N$. Let $\bc_i\in \Fm$ be such that 
\begin{equation}
    \bG\bE(\be_i)\bG^\dagger = \pm\bE(\bc_i).
\end{equation}
Then the matrix $\bF_{\scriptsize\bG}$ whose $i$th row is $\bc_i$ is a symplectic matrix such that
\begin{equation}\label{e-cliff}
    \bG\bE(\bc)\bG^\dagger = \pm\bE(\bc\T\bF_{\scriptsize\bG})
\end{equation}
for all $\bc\in \Fm$. 
We thus have a group homomorphism 
\begin{equation}\label{e-Phi}
    \Phi : \cl_N\longrightarrow \Sp(2m;2),\quad \bG\longmapsto \bF_{\scriptsize\bG},
\end{equation}
with kernel $\ker \Phi = \HW$~\cite{RCKP18}. This map is also surjective; see Section~\ref{S-DC} where specific preimages are given. 
\begin{rem}\label{R-Gd}
Since $\Phi$ is a homomorphism we have that $\Phi(\bG^\dagger) = \bF_{\scriptsize\bG}^{-1}$ and as a consequence $\bG^\dagger\bE(\bc)\bG = \pm\bE(\bc\T\bF_{\scriptsize\bG}^{-1})$.
\end{rem}
\subsection{Decomposition of the Clifford Group}\label{S-DC}
In this section we will make use of the Bruhat decomposition of $\Sp(2m;2)$ to obtain a decomposition of $\cl_N$. To do so we will use the surjectivity of $\Phi$ from~\eqref{e-Phi} and determine preimages of coset representatives from~\eqref{e-can}. The preimages of symplectic matrices $\FDP,\FUS$, and $\bF_{\Omega}(r)$ under $\Phi$ are 
\begin{align}\label{e-GDP}
    \bG_D(\bP) & := \be_\bv\longmapsto \be_{\scriptsize{\bP\T\bv}},\\
    \bG_U(\bS) & := \diag\left(i^{\scriptsize{\bv\T\bS\bv} \mod 4}\right)_{\scriptsize\bv\in\F_2^m}, \\
    \bG_{\Omega}(r) & :=(\bH_2)^{\otimes r}\otimes \bI_{2^{m-r}},
\end{align}
respectively. Here $\bH_2$ is the $2\times 2$ Hadamard matrix. We refer the reader to~\cite[Appendix I]{RCKP18} for details. Directly by the definition of the Hadamard matrix we have
\begin{equation}
    \bH_N:=\bG_{\Omega}(m) = \frac{1}{\sqrt{2^m}}[(-1)^{\scriptsize{\bv\T\bw}}]_{\scriptsize\bv,\bw\in \F_2^m}.
\end{equation}
Whereas, for any $r=1,\ldots,m$, one straightforwardly computes
\begin{equation}\label{e-Gomr}
    \bG_{\Omega}(r)\bZ(m,r) = [(-1)^{\scriptsize{\bv\T \bw}} f(\bv,\bw,r)]_{\scriptsize\bv,\bw\in \F_2^m},
\end{equation}
where $\bZ(m,r) = \bI_{2^r}\otimes \sigma_z^{\otimes m-r}$ are diagonal Pauli matrices, and 
\begin{equation}\label{e-f}
    f(\bv,\bw,r) = \prod_{i=r+1}^m(1+v_i + w_i).
\end{equation}
The value of $f$ will be 1 precisely when $\bv$ and $\bw$ coincide in their last $m-r$ coordinates and 0 otherwise.
\section{Binary Subspace Chirps}
Binary subspace chirps (BSSCs) were introduced in~\cite{TC19} as a generalization of binary chirps (BCs)~\cite{HCS08}. 
In this section we describe the geometric and algebraic features of BSSCs, and use their structure to develop a reconstruction algorithm. 
For each $1\leq r\leq m$, subspace $H\in \cG(m,r;2)$, and symmetric $\bS_r \in \Sym(r;2)$ we will define a unit norm vector in $\C^N$ as follows. 
Let $H$ be the column space of $\HI$, as described in Section~\ref{S-SC}. 
Then $\HI$ is completed to an invertible $\bP:=\PI$ as in~\eqref{e-PI}. For all ${\bf b}, \ba \in \F_2^m$ define
\[
     \bw_{\scriptsize{\bf b}}^{\scriptsize{H,\bS_r}}(\ba) = \frac{1}{\sqrt{2^r}}i^{\scriptsize{\ba\T\bP^{-\sf T}\bS\bP^{-1}\ba+2{\bf b}\T\bP^{-1}\ba}}f({\bf b},\bP^{-1}\ba,r),
\]
where $\bS\in \Sym(m;2)$ is the matrix with $\bS_r$ on the upper-left corner and 0 elsewhere, $f$ is as in~\eqref{e-f}, and the arithmetic in the exponent is done modulo 4. To avoid heavy notation however we will omit the upper scripts. Then we define a \emph{binary subspace chirp} to be 
\begin{equation}
 \bw_{\scriptsize{\bf b}} := [ \bw_{\scriptsize{\bf b}}(\ba)]_{\scriptsize\ba\in \F_2^m}\in \C^N.
\end{equation}
Note that when $r=m$ we have $\bP = \bI_m$ and $f$ is the identically 1 function. Thus, one obtains the \emph{binary chirps} \cite{HCS08} as a special case.

Directly from the definition (and the definition of $f$) it follows that $ \bw_{\scriptsize{\bf b}}(\ba)\neq 0$ precisely when ${\bf b}$ and $\bP^{-1}\ba$ coincide in their last $m-r$ coordinates. Making use of the structure of $\bP$ as in \eqref{e-PI} we may conclude that $ \bw_{\scriptsize{\bf b}}(\ba) \neq 0$ iff
\begin{equation}\label{e-bmr}
    \wHI\vspace{-.004 in}^{\T}\ba = {\bf b}_{m-r},
\end{equation}
where ${\bf b}_{m-r}\in \F_2^{m-r}$ consists of the last $m-r$ coordinates of ${\bf b}$. It follows that $ \bw_{\scriptsize{\bf b}}$ has $2^r$ non-zero entries, and thus it is a unit norm vector. Making use of~\eqref{e-PI1} we see that the solution space of~\eqref{e-bmr} is given by 
\begin{equation}\label{e-bmr1}
    \{\widetilde{ \bx}:=\bI_{\widetilde{\cI}}{\bf b}_{m-r} + \HI \bx\mid  \bx\in \F_2^r\}.
\end{equation}
We say that $\HI$ determines the \emph{on-off pattern} of $ \bw_{\scriptsize{\bf b}}$.

\begin{rem}\label{R-emb}
Fix a subspace chirp $\wb$, and write ${\bf b}\T = [{\bf b}\T_{\!\!\!\!r}\,\,\,\, {\bf b}\T_{\!\!\!\!m-r}]$. Then $\wb(\ba) \neq 0$ iff $\ba$ is as in \eqref{e-bmr1} for some $ \bx\in \F_2^r$. Making use of \eqref{e-PIT} and \eqref{e-PI1} we obtain
\begin{equation}\label{e-p1a1}
    \bP^{-1}\ba = \left[ \!\!\!\begin{array}{c} \bx \\ {\bf b}_{m-r}\end{array}\!\!\!\right],
\end{equation}
and as a consequence $\ba\T\bP^{-\sf T}\bS\bP^{-1}\ba =  \bx\T\bS_r \bx$ where $\bS_r$ is the (symmetric) upper-left $r\times r$ block of $\bS$. Thus the nonzero entries of $\wb$ are of the form
\begin{equation}\label{e-p1a}
    \wb( \bx) = \frac{(-1)^{\wt(\scriptsize{{\bf b}_{m-r})}}}{\sqrt{2^r}}i^{\scriptsize{ \bx\T\bS_r \bx} + 2{\bf b}\T_{\!\!\!r} \bx}
\end{equation}
for $ \bx\in\F_2^r$. Note that there is a slight abuse of notation where we have identified $ \bx$ with $\bP^{-1}\ba$ (thanks to \eqref{e-p1a1} and the fact that ${\bf b}$ is fixed). Above, the function $\wt(\sbt)$ is just the Hamming weight which counts the number of non-zero entries in a binary vector. We conclude that the \emph{on-pattern} of a rank $r$ binary subspace chirp is just a binary chirp in $2^r$ dimensions; compare \eqref{e-p1a} with \cite[Eq.~(5)]{HCS08}. It follows that all lower-rank chirps are embedded in $2^m$ dimensions, which along with all the chirps in $2^m$ dimensions yield all the binary subspace chirps. As discussed, the embeddings are determined by subspaces.
\end{rem}
\subsection{Algebraic Structure of BSSCs}
Let $\bG_{\bF} = \bG_D(\bP\T) \bG_U(\bS)\bG_{\Omega}(r)$, that is, $\Phi(\bG_{\scriptsize \bF}) = \bF$. Recall also that $\{\be_{\ba}\mid \ba\in \F_2^m\}$ is the standard basis of $\C^N$. If we put $\bu:= \bP^{-1}\ba$ we have 
\begin{align}
     \bw_{\scriptsize{\bf b}} & = \frac{1}{\sqrt{2^r}}\sum_{\scriptsize\ba\in \F_2^m}  \bw_{\scriptsize{\bf b}}(\ba)\be_{\scriptsize\ba} \nonumber \\
    & = \frac{1}{\sqrt{2^r}}\sum_{\scriptsize\bu\in \F_2^m}i^{\scriptsize{\bu\T\bS\bu}}(-1)^{\scriptsize{{\bf b}\T\bu}}f({\bf b},\bu,r)\be_{\scriptsize{\bb{Pu}}} \nonumber\\
     & = \bG_D(\bP\T)\cdot\frac{1}{\sqrt{2^r}} \sum_{\scriptsize\bu\in \F_2^m}i^{\scriptsize{\bu\T\bS\bu}}(-1)^{\scriptsize{{\bf b}\T\bu}}f({\bf b},\bu,r)\be_{\scriptsize{\bu}}\nonumber \\
    & = \bG_D(\bP\T)\bG_U(\bS)\bG_{\Omega}(r)\bZ(m,r)\be_{\scriptsize{{\bf b}}} \label{e-eqq1}\\
    & = \bG_{\scriptsize{\bF}}\cdot\bZ(m,r)\be_{\scriptsize{{\bf b}}}\label{e-eqq2},
\end{align}
where~\eqref{e-eqq1} follows by~\eqref{e-Gomr}. Note that in~\eqref{e-eqq2}, the diagonal Pauli $\bZ(m,r)$ only ever introduces an additional sign on columns of $\bG_{\scriptsize{\bF}}$. Thus, the binary subspace chirp $ \bw_{\scriptsize{{\bf b}}}$ is nothing else but the ${\bf b}$th column of $\bG_{\scriptsize{\bF}}$, up to a sign. However, as mentioned, for our practical purposes a sign (or even a complex unit) is irrelevant.

Since commuting matrices can be simultaneously diagonalized, it is natural to consider the common eigenspace of maximal stabilizers. We have the following.
\begin{theo}\label{T-ES}
Let $\bF$ and $\bG_{\scriptsize\bF}$ be as above. The set $\{ \bw_{\scriptsize{\bf b}}~\mid{\bf b}\in \F_2^m\}$, that is the columns of $\bG_{\scriptsize\bF}$, is the common eigenspace of the maximal stabilizer $\bE(\Imr\bP\T,\,(\Imr\bS+\Imrr)\bP^{-1})$.
\end{theo}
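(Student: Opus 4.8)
The plan is to build on \eqref{e-eqq2}, which exhibits $\bw_{\bf b}$ as $\bG_{\scriptsize\bF}\bZ(m,r)\be_{\bf b}$. Since $\bZ(m,r)$ is diagonal, $\bZ(m,r)\be_{\bf b} = (-1)^{\wt({\bf b}_{m-r})}\be_{\bf b}$, so each $\bw_{\bf b}$ is, up to sign, the column $\bG_{\scriptsize\bF}\be_{\bf b}$. The standard basis $\{\be_{\bf b}\}$ is the common eigenbasis of the diagonal maximal stabilizer $\cS_0 := \{\bE({\bf 0},\bd)\mid \bd\in\F_2^m\}$, with $\bE({\bf 0},\bd)\be_{\bf b} = (-1)^{{\bf b}\T\bd}\be_{\bf b}$. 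Conjugating by the unitary $\bG_{\scriptsize\bF}$ turns this into the statement that $\{\bw_{\bf b}\}$ is the common eigenbasis of $\bG_{\scriptsize\bF}\cS_0\bG_{\scriptsize\bF}^\dagger$; the whole proof then reduces to identifying this conjugated group with $\cS := \bE(\Imr\bP\T,\,(\Imr\bS+\Imrr)\bP^{-1})$.

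To do this I would conjugate a generator $\bE({\bf 0},\bd)$ through the three Bruhat factors of $\bG_{\scriptsize\bF}=\bG_D(\bP\T)\bG_U(\bS)\bG_{\Omega}(r)$ one at a time, using the elementary rules (derivable from \eqref{e-cliff} or quotable from \cite{RCKP18}): $\bG_{\Omega}(r)$ interchanges the $\sigma_x$- and $\sigma_z$-labels on the first $r$ coordinates, sending $\bE(\ba,{\bf b})$ to $\pm\bE(\Imrr\ba+\Imr{\bf b},\,\Imr\ba+\Imrr{\bf b})$; $\bG_U(\bS)$ fixes the $\sigma_x$-part $\ba$ and replaces the $\sigma_z$-part by ${\bf b}+\bS\ba$; and $\bG_D(\bQ)$ sends $\bE(\ba,{\bf b})$ to $\bE(\bQ\T\ba,\bQ^{-1}{\bf b})$, which for $\bQ=\bP\T$ reads $\bE(\bP\ba,\bP^{-\sf T}{\bf b})$.

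Starting from $\ba={\bf 0}$, ${\bf b}=\bd$ and applying these in the order $\bG_{\Omega}(r)$, then $\bG_U(\bS)$, then $\bG_D(\bP\T)$ (innermost first, since $\bG_{\Omega}(r)$ is the rightmost factor), I get successively $\pm\bE(\Imr\bd,\Imrr\bd)$, then $\pm\bE(\Imr\bd,(\Imrr+\bS\Imr)\bd)$, then $\pm\bE(\bP\Imr\bd,\,\bP^{-\sf T}(\Imrr+\bS\Imr)\bd)$. Transposing these two label-blocks into row form and using $\Imr\T=\Imr$, $\Imrr\T=\Imrr$, $\bS\T=\bS$ gives $\pm\bE\big(\bd\T\Imr\bP\T,\,\bd\T(\Imr\bS+\Imrr)\bP^{-1}\big)$, which is exactly the element of $\cS$ indexed by ${\bf x}=\bd$. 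Hence $\bG_{\scriptsize\bF}\cS_0\bG_{\scriptsize\bF}^\dagger=\cS$ as sets of Paulis, so the orthonormal family $\{\bw_{\bf b}\}$ of $2^m$ columns of the unitary $\bG_{\scriptsize\bF}$ is the common eigenbasis of $\cS$. That $\cS$ is a maximal stabilizer is then automatic: it is the image of the maximal stabilizer $\cS_0$ under conjugation by a unitary, hence $2^m$ commuting Paulis, and since conjugation fixes $-\bI_N$ and $-\bI_N\notin\cS_0$, also $-\bI_N\notin\cS$.

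The main obstacle is bookkeeping rather than ideas: one must keep the transposes and inverses in the $\bG_D(\bP\T)$ step straight and track the diagonal swap matrices $\Imr,\Imrr$ so that the accumulated label matches the theorem's transposed expression. A real trap is to read off the symplectic image via the shortcut ${\bf c}\T\bF$ of \eqref{e-cliff}: because conjugation composes the three factors in reverse, this requires applying the Bruhat factors of $\bF$ in the opposite order, and taking them in the written order silently drops the $\bS$-contribution; conjugating factor by factor avoids this. The last delicate point is checking that the accumulated signs and powers of $i$ leave each conjugate equal to the $\bE$-normalized Pauli rather than its negative, which is what makes $\cS$ a stabilizer in the sense of \eqref{e-state} and each $\bw_{\bf b}$ a genuine $\pm1$-eigenvector.
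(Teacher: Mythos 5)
Your proposal is correct and takes essentially the same route as the paper's own proof: both rest on the Clifford covariance property (Remark~\ref{R-Gd}), i.e.\ that applying $\bG_{\bF}$ maps eigenvectors of $\bE(\bc)$ to eigenvectors of the conjugated Pauli, and your factor-by-factor label computation through $\bG_{\Omega}(r)$, $\bG_U(\bS)$, $\bG_D(\bP\T)$ reproduces exactly the paper's concluding symplectic product $[\Imr\,\,\Imrr]\bF_U(\bS)\bF_D(\bP\T) = [\Imr\bP\T\,\,\,(\Imr\bS+\Imrr)\bP^{-1}]$. The only cosmetic difference is the seed: you start from the standard basis as the eigenbasis of the diagonal stabilizer $\{\bE({\bf 0},\bd)\}$ and conjugate through all three Bruhat factors, whereas the paper first identifies $\bE(\Imr,\Imrr)$ as the stabilizer of the columns of $\bG_{\Omega}(r)$ and then pushes through only the remaining two factors; these are equivalent, and your treatment of signs (flagged but not chased) matches the paper's own level of rigor.
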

\begin{proof}
Consider the matrix $\bG:=\bG_{\scriptsize{\bF}}$ parametrized by the symplectic matrix $\bF$, and recall that $\wb$ is the ${\bf b}$th column of $\gf$. It follows from Remark~\ref{R-Gd} that the columns of $\bG$ are the eigenspace of $\bE( \bx,\by)$ iff 
\begin{equation}
    \bG^\dagger \bE( \bx,\by)\bG = \pm \bE([ \bx,\by]\T\bF^{-1})
\end{equation}
is diagonal. Recall also that $\bE( \bx,\by)$ is diagonal iff $ \bx = {\bf 0}$, and observe that $\bF_{\Omega}(r)^{-1} = \bF_{\Omega}(r)$. Thus, $\bG_{\Omega}(r)$ will be the common eigenspace of the maximal stabilizer $\cS$ iff $\pm\bE([ \bx\,\,\by]^{\T}\bF_{\Omega}(r))$ is diagonal for all $\bE( \bx,\by)\in \cS$. Then it is easy to see that such a maximal stabilizer is $\bE(\Imr,\,\Imrr)$. Next, if $ \bw$ is an eigenvector of $\bE(\bc)$ then
\begin{align*}
   \bG\bw = \pm\bG\bE(\bc) \bw & = \pm\bG\bE(\bc)\bG^\dagger\bG \bw 
   = \pm\bE(\bc\T\Phi(\bG))\bG\bw
   \end{align*}
implies that $\bG\bw$ is an eigenvector of $\bE(\bc\T\Phi(\bG))$. The proof is concluded by computing $[\Imr\,\,\,\,\Imrr]\bF_U(\bS)\bF_D(\bP\T)$.
\end{proof}
\begin{rem}
For $r=m$ one has $\bE(\Imr,\,\Imrr) = \bE(\bI_m,\, {\bf 0})$ and $\bG_{\Omega}(r) = \bH_N$. 
Thus the above theorem covers the well-known fact that $\bH_N$ is the common eigenspace of $\bE(\bI_m,\, {\bf 0})$. In this extremal case we also have $\bP_{\cI} = \bI_m$ and $\widetilde{\bS_r} = \bS\in \Sym(m;2)$. So the above theorem also covers~\cite[Lem.~11]{CRCP19} which (in the language of this paper) says  that the common eigenspace of $\bE(\bI_m,\,\bS)$ is $\bG_U(\bS)\bH_N$.
\end{rem}
\subsection{Reconstruction of Single BSSC}
Now we shall use the underlying algebraic structure of BSSCs summarized in Theorem~\ref{T-ES} to determine a reconstruction algorithm that unifies the identification of the binary subspace $H$~\cite{TC19}, and the symmetric matrix $\bS$~\cite{HCS08}.
We focus first on noise-free reconstruction. The easiest task is the recovery of the rank $r$. Namely, by \eqref{e-bmr1} we have
\begin{equation}
    \wb(\ba)\ov{\wb(\ba)} = \left\{\!\!\begin{array}{ll} 1 /2^r, & 2^r \text{ times,} \\ 0, & 2^{m-r}\text{ times.}\end{array} \right.
\end{equation}
To reconstruct $\bS_r$ and then eventually $H$ we modify the \emph{shift and multiply} technique used in \cite{HCS08} for the reconstruction of binary chirps. However, in our scenario extra care is required as the shifting can perturb the on-off pattern. Namely, we must use only shifts $\ba\longmapsto \ba+\be$ that preserve the on-off pattern. It follows by \eqref{e-bmr} that we must use only shifts by $\be$ that satisfy $\widetilde{\bH_\cI}\T\be = {\bf 0}$, or equivalently $\be = \bH_\cI\by$ for $\by\in \F_2^r$. In this instance, thanks to \eqref{e-PI1} we have 
\begin{equation}
    \bP^{-1}\be = \bP^{-1}\bH_\cI\by = \twomatv{\by}{{\bf 0}}.
\end{equation}
If we focus on the nonzero entries of $\wb$ and on shifts that preserve the on-off pattern of $\wb$ we can make use of Remark~\ref{R-emb}, where with another slight abuse of notation we identify $\by$ with $\bP^{-1}\be$. It is beneficial to take $\by$ to be ${\bf f}_i$ - one of the standard basis vectors of $\F_2^r$. With this preparation we are able to use the shift and multiply technique:
\begin{equation}\label{e-shift}
    \wb(\bx+{\bf f}_i)\ov{\wb(\bx)} = \frac{1}{2^r}\cdot i^{\scriptsize {\bf f}\T_{\!\!i}\bS_r{\bf f}_i}\cdot(-1)^{\scriptsize {\bf b}\T_{\!\!r}{\bf f}_i}\cdot(-1)^{\scriptsize \bx\T\bS_r{\bf f}_i}.
\end{equation}
Note that above only the last term depends on $ \bx$. Next, multiply \eqref{e-shift} with the Hadamard matrix to obtain
\begin{equation}\label{e-shift1}
  i^{\scriptsize {\bf f}\T_{\!\!i}\bS_r{\bf f}_i}\cdot(-1)^{\scriptsize {\bf b}\T_{\!\!r}{\bf f}_i}\sum_{\scriptsize \bx\in \F_2^r}(-1)^{\scriptsize \bx\T(\bv+\bS_r{\bf f}_i)},
\end{equation}
for all $\bv\in \F_2^r$ (where we have omitted the scaling factor). Then \eqref{e-shift1} is nonzero precisely when $\bv = \bS_r{\bf f}_{i}$ - the $i$th column of $\bS_r$. With $\bS_r$ in hand, one recovers ${\bf b}_r$ similarly by multiplying $\wb( \bx)\ov{ \bw_{\scriptsize {\bf 0}}( \bx)}$ with the Hadamard matrix. To recover ${\bf b}_{m-r}$ one simply uses the knowledge of nonzero coordinates and \eqref{e-p1a1}. Next, with ${\bf b}$ in hand and the knowledge of the on-off pattern one recovers $\bH_\cI$ (and thus $H$) using \eqref{e-bmr} or equivalently \eqref{e-bmr1}.

In the above somewhat ad-hoc method we did not take advantage of the geometric structure of the subspace chirps as eigenvectors of given maximal stabilizers or equivalently as the columns of given Clifford matrices. We do this next by following the line of \cite{TC19}.

Let $ \bw$ be a subspace chirp, and recall that it is a column of $\bG:=\bG_{\scriptsize{\bF}} = \bG_D(\bP\T) \bG_U(\bS)\bG_{\Omega}(r)$ where $\bF:=\bF_{\Omega}(r)\FUS\bF_D(\bP\T)$. Then by construction $\bG$ and $\bF$ satisfy $\bG^\dagger\bE(\bc)\bG = \pm\bE(\bc\T\bF^{-1})$ for all $\bc\in \F_2^{2m}$. Recall also from Theorem~\ref{T-ES} that $\bG$ is the common eigenspace of the maximal stabilizer 
\begin{equation}\label{e-ES1}
    \bE(\Imr\bP\T,\,(\Imr\bS+\Imrr)\bP^{-1}\!) = \bE\!\left(\!\left[\!\begin{array}{c|c} \!\bH_{\cI}\!\!\!^{\T}\!&\!\bS_r\bI_{\cI}^{\T}\!\! \\ \!{\bf 0}&\!\widetilde{\bH_{\cI}}^{\T}\!\!\end{array} \!\right] \!\right).
\end{equation}
Thus, to reconstruct the unknown subspace chirp $ \bw$ it is sufficient to first identify the maximal stabilizer that stabilizes it, and then identify $ \bw$ as a column of $\bG$. A crucial observation at this stage is the fact that the maximal stabilizer in \eqref{e-ES1} has precisely $2^r$ off-diagonal and $2^{m-r}$ diagonal Pauli matrices.

We now make use of the argument in Theorem~\ref{T-ES}, that is, $ \bw$ is an eigenvector of $\bE(\bc)$ iff $\bE(\bc\T\bF^{-1})$ is diagonal. 
Let us focus first on identifying the diagonal Pauli matrices that stabilize $ \bw$, that is, $\bc\T = [{\bf 0}\,\,\,\,\by\T]$. 
Then for such $\bc$, $ \bw$ is an eigenvector of $\bE(\bc)$ iff $\by\T\bH_{\cI} = 0$ iff $\by = \widetilde{\bH_{\cI}}\bz$ for some $\bz\in\F_2^{m-r}$. 
Thus, to identify the diagonal Pauli matrices that stabilize $ \bw$, and consequently the subspaces $\bH_{\cI}$ and $\widetilde{\bH_{\cI}}$, it is sufficient to find $2^{m-r}$ vectors $\by\in \F_2^{m}$ such that $ \bw^\dagger \bE({\bf 0},\by) \bw \neq 0$.
It follows by \eqref{e-Eab1} that the latter is equivalent with finding $2^{m-r}$ vectors $\by$ such that 
\begin{equation}
    \sum_{\scriptsize \bv\in\F_2^m}(-1)^{\scriptsize \bv\T\by}| \bw(\bv)|^2 \neq 0.
\end{equation}
The above is just a Hadamard transform which can be efficiently undone. With a similar argument, $ \bw$ is an eigenvector of a general Pauli matrix $\bE( \bx,\by)$ iff 
\begin{equation}\label{e-shift2}
  \hspace{-0.05 in} \bw^\dagger\bE( \bx,\by) \bw = i^{\scriptsize  \bx\T\by}\!\!\sum_{\scriptsize \bv\in\F_2^m}(-1)^{\scriptsize \bv\T\by}\ov{ \bw(\bv\!+\! \bx)} \bw(\bv) \neq 0.
\end{equation}
This is again just a Hadamard transform.

Let us now explicitly make use of \eqref{e-shift2} to reconstruct the symmetric matrix $\bS_r$, while assuming that we have already reconstructed $\bH_{\cI}, \widetilde{\bH_{\cI}}$. We first have
\begin{equation}
    \bF^{-1} = \left[\!\!\begin{array}{cccc} \bI_{\cI}\bS_r & \widetilde{\bH_{\cI}} & \bI_{\cI} & {\bf 0} \\ \bH_{\cI} & {\bf 0}&{\bf 0}&\bI_{\widetilde{\cI}}\end{array}\!\!\right].
\end{equation}
Then, for $\bc = \left[\!\!\begin{array}{c}  \bx\\\by \end{array}\!\!\right]$, we have $ \bw^\dagger\bE( \bx,\by) \bw\neq 0$ iff $\bE(\bc\T\bF^{-1})$ is diagonal, iff
\begin{equation}\label{e-SrRec}
     \bx\T[\bI_{\cI}\bS_r \,\,\,\, \widetilde{\bH_{\cI}}] = \by\T[\bH_{\cI}\,\,\,\,{\bf 0}].
\end{equation}
We are interested in $\by\in\F_2^m$ that satisfy \eqref{e-SrRec}. First note that solutions to \eqref{e-SrRec} exist only if $ \bx\T\widetilde{\bH_{\cI}} = {\bf 0}$, 
i.e.,
if $ \bx = \bH_{\cI}\bz$, $\bz\in \F_2^r$. For such $ \bx$, making use of \eqref{e-PI1}, we conclude that \eqref{e-SrRec} holds iff $\bz\T\bS_r = \by\T\bH_{\cI}$, solutions of which are given by
\begin{equation}\label{e-SrRec2}
    \by = \widetilde{\bH_{\cI}}\bv + \bI_{\cI}\bS_r\bz,\,\,\,\,\bv\in\F_2^{m-r}.
\end{equation}
If we take $\bz = {\bf f}_{i}$ - the $i$th standard basis vector of $\F_2^r$ - we have that $\bz\T\bS_r$ is the $i$th row/column of $\bS_r$ while $ \bx = \bH_{\cI}\bz$ is the $i$th column of $\bH_{\cI}$.

We resume everything to the following algorithm.


\begin{algorithm}\caption{Reconstruction of single noiseless BSSC}\label{alg}
{\bf Input:} Unknown BSSC $\bw$
\begin{algorithmic}
\STATE~1. Compute $\bw^\dagger \bE({\bf 0},\by)\bw$ for $\by\in \F_2^m$.
\STATE~2. Find $\bH_{\cI}$ using 
\[
    \bw^\dagger \bE({\bf 0},\by)\bw \neq 0 \text{ iff }\by\T\bH_{\cI} = {\bf 0} \text{ iff } \by\in \cs(\widetilde{\bH_{\cI}}).
\]
\vspace{-0.2 in}
\STATE~3. Construct $\PI$ as in~\eqref{e-PI}.
\STATE~4. $r = \rk(\bH_{\cI})$. 
\STATE~5. $\bb{for}$ $i = 1,\ldots,r$ do:
\STATE~6. \quad Compute $\bw^\dagger \bE(\bH_{\cI}{\bf f}_i,\by)\bw$ for $\by\in\F_2^m$.
\STATE~7. \quad Determine the $i$th row of $\bS_r$ using \eqref{e-SrRec2}.
\STATE~8. {\bf end for}
\STATE~9. Dechirp $\bw$ to find ${\bf b}$.
\end{algorithmic}
{\bf Output:} $r,\bS_r,\PI,{\bf b}$.
\end{algorithm}


\section{Multi-BSSC Reconstruction}

In~\cite{TC19} a reconstruction algorithm of a single BSSC in the presence of 
noise was presented. The algorithm makes $m+1$ rank hypothesis, and for each hypothesis the on-off pattern is estimated. The best BSSC among the $m+1$ is output. A similar strategy can be used to generalize Algorithm~\ref{alg} to decode multiple simultaneous transmissions in a multi-user scenario 
\begin{equation}\label{e-linBSSC}
    \bs = \sum_{\ell = 1}^L h_\ell \bw_\ell.
\end{equation}
Here the channel coefficients $h_\ell$ are complex valued, and can be modeled as $\cC\cN(0,1)$, and $ \bw_\ell$ are BSSCs. This represents, e.g., a random access scenario, where $L$ randomly chosen active users transmit a signature sequence, and the receiver should identify the active users. 

We generalize the single-user 
algorithm to a multi-user algorithm, where the coefficients $h_\ell$ are estimated in the process of identifying the most probable transmitted signals. For this, we use Orthogonal Matching Pursuit (OMP), which is analogous with the strategy of~\cite{HCS08}. We assume that we know the number of active users $L$. 
\begin{algorithm}\caption{Reconstruction of noiseless multi-BSSCs}\label{alg1}
{\bf Input:} Signal~$\bs$ as in~\eqref{e-linBSSC}.
\begin{algorithmic}
\STATE~1. \textbf{for} $\ell=1:L$ do
\STATE~2. \quad \textbf{for} $r = 0:m$ do
\STATE~3. \qquad Greedily construct the $m-r$ dimensional subspace \\ 
\hspace{.425 in} $\wHI$ using the highest values of $|\bs^\dagger\bE({\bf 0},\by)\bs|$.
\STATE~4. \qquad Estimate $\widetilde{\bw}_r$ as in Alg.~\ref{alg}.
\STATE~5. \quad \textbf{end for} 
\STATE~6. \quad Select the best estimate $\widetilde{\bw}_\ell$.
\STATE~7. \quad Determine $\widetilde{h}_1,\ldots,\widetilde{h}_\ell$ that minimize
$$\left\|\bs - \sum_{j = 1}^\ell h_j\widetilde{\bw}_j\right\|_2$$.
\STATE~8. \quad Reduce $\bs$ to $\bs' = \bs - 
 \sum_{j = 1}^\ell \widetilde{h}_j\widetilde{\bw}_j$.
 \STATE~9. \textbf{end for}
\end{algorithmic}
\textbf{Output: }$\widetilde{\bw}_1,\ldots, \widetilde{\bw}_L$.
\end{algorithm}
The estimated error probability of single user transmission for $L = 2,3$ is given in Figure~\ref{fig}. For the simulation, the BSSCs are chosen uniformly at random from the codebook.
We compare the results with BC codebooks and random codebooks with the same cardinality. For random codebooks, steps (2)-(5) are substituted with exhaustive search (which is infeasible beyond $m = 6$).

The erroneous reconstructions of Algorithm~\ref{alg1} come in part from steps (3)-(4). Specifically, from the cross-terms of 
\begin{equation*}
    \bs^\dagger\bs = \sum_{\ell = 1}^L|h_\ell|^2\| \bw_\ell\|^2 + \sum_{i\neq \ell} \ov{h_i}h_\ell \bw_i^\dagger \bw_\ell.
\end{equation*}
For BCs, these cross-terms are the well-behaved \emph{second order Reed-Muller functions}. On the other hand, the BSSCs, unlike the BCs~\cite{CHJ10}, do not form a group under point-wise multiplication, and thus the products $ \bw_i^\dagger \bw_\ell$ are more complicated. In addition, linear combinations of BSSCs~\eqref{e-linBSSC} may perturb each others on-off pattern and depending on the nature of the channel coefficients $h_\ell$, the algorithm may detect a higher rank BSSC in $\bs$. 
If the channel coefficients of two low rank BSSCs happen to have similar amplitudes, the algorithm may detect a lower rank BSSC that corresponds to the overlap of the on-off patterns of the BSSCs. Despite these scenarios, an elaborate decoding algorithm like the one discussed, is able to provide reliable performance.

\begin{figure}
\centering
\includegraphics[width = 0.5\textwidth]{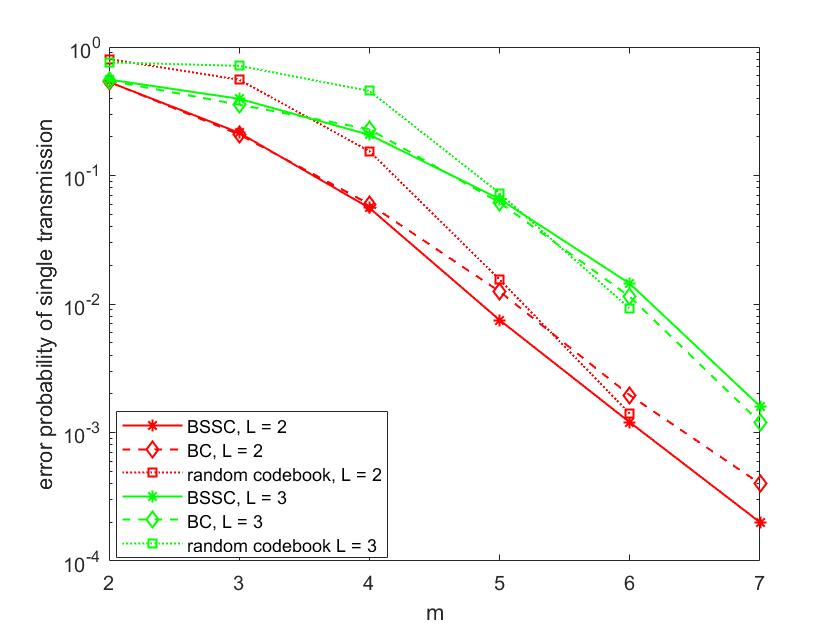}
\caption{Error probability of Algorithm~\ref{alg1}.} 
\label{fig}
\end{figure}

Interestingly, BSSCs outperform BCs, despite these codebooks having the same minimum distance. In~\cite{TC19}, the same was observed in single-user reconstruction. With increasing $m$, the performance benefit of the algebraically defined codebook over random codebooks diminishes. However, the decoding complexity remains manageable for the algebraic codebooks.

\section{Conclusion}
We have extended the work~\cite{TC19} by exploiting the geometry of BSSCs. These Grassmannian lines are described as common eigenspaces of maximal sets of commuting Pauli matrices, or equivalently, as columns of Clifford matrices. Further, we have developed a low complexity algorithm for multi BSSCs transmission with low error probability. In future research, we shall consider also noise in
multi-user reconstruction, and work toward a practical algorithm along the lines of~\cite{Calderbank2019}.


\section*{Acknowledgements}
This work was funded in part by the Academy of Finland (grants 299916, 319484).

\bibliographystyle{IEEEtran}
\bibliography{IEEEabrv,BSSC}

\begin{thebibliography}{10}
\providecommand{\url}[1]{#1}
\csname url@samestyle\endcsname
\providecommand{\newblock}{\relax}
\providecommand{\bibinfo}[2]{#2}
\providecommand{\BIBentrySTDinterwordspacing}{\spaceskip=0pt\relax}
\providecommand{\BIBentryALTinterwordstretchfactor}{4}
\providecommand{\BIBentryALTinterwordspacing}{\spaceskip=\fontdimen2\font plus
\BIBentryALTinterwordstretchfactor\fontdimen3\font minus
  \fontdimen4\font\relax}
\providecommand{\BIBforeignlanguage}[2]{{%
\expandafter\ifx\csname l@#1\endcsname\relax
\typeout{** WARNING: IEEEtran.bst: No hyphenation pattern has been}%
\typeout{** loaded for the language `#1'. Using the pattern for}%
\typeout{** the default language instead.}%
\else
\language=\csname l@#1\endcsname
\fi
#2}}
\providecommand{\BIBdecl}{\relax}
\BIBdecl

\bibitem{Viswanath1999}
P.~Viswanath and V.~Anantharam, ``Optimal sequences and sum capacity of
  synchronous {CDMA} systems,'' \emph{IEEE Trans. Inf. Th.}, vol.~45, no.~6,
  pp. 1984--1991, Sep. 1999.

\bibitem{Love2003}
D.~Love, R.~{Heath, Jr.}, and T.~Strohmer, ``Grassmannian beamforming for
  multiple-input multiple-output wireless systems,'' \emph{IEEE Trans. Inf.
  Th.}, vol.~49, no.~10, pp. 2735--2747, Oct. 2003.

\bibitem{Kotter2008}
R.~Kötter and F.~Kschischang, ``Coding for errors and erasures in random
  network coding,'' \emph{IEEE Trans. Inf. Th.}, vol.~54, no.~8, pp.
  3579--3591, Aug. 2008.

\bibitem{DeVore2007}
R.~DeVore, ``Deterministic constructions of compressed sensing matrices,''
  \emph{Journal of Complexity}, vol.~23, no. 4--6, pp. 918--925, 2007.

\bibitem{HCS08}
S.~D. {Howard}, A.~R. {Calderbank}, and S.~J. {Searle}, ``A fast reconstruction
  algorithm for deterministic compressive sensing using second order
  {R}eed-{M}uller codes,'' in \emph{Conference on Information Sciences and
  Systems}, March 2008, pp. 11--15.

\bibitem{Li2014}
S.~Li and G.~Ge, ``Deterministic sensing matrices arising from near orthogonal
  systems,'' \emph{IEEE Trans. Inf. Th.}, vol.~60, no.~4, pp. 2291--2302, Apr.
  2014.

\bibitem{Wang2018}
G.~Wang, M.-Y. Niu, and F.-W. Fu, ``Deterministic constructions of compressed
  sensing matrices based on codes,'' \emph{Cryptography and Communications},
  Sep. 2018.

\bibitem{TC18}
A.~{Thompson} and R.~{Calderbank}, ``Compressed neighbour discovery using
  sparse {K}erdock matrices,'' in \emph{Proc. IEEE ISIT}, Jun. 2018, pp.
  2286--2290.

\bibitem{GZ10}
D.~{Guo} and L.~{Zhang}, ``Virtual full-duplex wireless communication via rapid
  on-off-division duplex,'' in \emph{Allerton Conference on Communication,
  Control, and Computing}, Sep. 2010, pp. 412--419.

\bibitem{Tsai2018}
C.~Tsai and A.~Wu, ``Structured random compressed channel sensing for
  millimeter-wave large-scale antenna systems,'' \emph{IEEE Trans. Sign.
  Proc.}, vol.~66, no.~19, pp. 5096--5110, Oct. 2018.

\bibitem{Calderbank2019}
R.~Calderbank and A.~Thompson, ``{CHIRRUP}: a practical algorithm for unsourced
  multiple access,'' \emph{Information and Inference: A Journal of the IMA},
  no. iaz029, 2019, https://doi.org/10.1093/imaiai/iaz029.

\bibitem{Osseiran2016}
{A. Osseiran, J. Monserrat, and P. Marsch, editors}, \emph{{5G} Mobile and
  Wireless Communications Technology}.\hskip 1em plus 0.5em minus 0.4em\relax
  Cambridge University Press, 2016.

\bibitem{TC19}
O.~{Tirkkonen} and R.~{Calderbank}, ``Codebooks of complex lines based on
  binary subspace chirps,'' in \emph{IEEE Information Theory Workshop (ITW)},
  2019, pp. 1--5.

\bibitem{DM03}
J.~Dehaene and B.~D. Moor, ``Clifford group, stabilizer states, and linear and
  quadratic operations over {G}{F}(2),'' \emph{Phys. Rev A}, vol.~68, p.
  042318, Oct. 2003.

\bibitem{Rao93}
{R. Ranga Rao}, ``On some explicit formulas in the theory of {Weil}
  representation,'' vol. 157, no.~2, 1993, pp. 335--371.

\bibitem{NC00}
M.~A. Nielsen and I.~L. Chuang, \emph{Quantum computation and quantum
  information}.\hskip 1em plus 0.5em minus 0.4em\relax Cambridge University
  Press, Cambridge, 2000.

\bibitem{RCKP18}
N.~{Rengaswamy}, R.~{Calderbank}, S.~{Kadhe}, and H.~D. {Pfister}, ``Synthesis
  of logical {C}lifford operators via symplectic geometry,'' in \emph{Proc.
  IEEE ISIT}, Jun. 2018, pp. 791--795, arXiv:1803.06987.

\bibitem{CRCP19}
T.~{C}an, N.~{Rengaswamy}, R.~{Calderbank}, and H.~D. {Pfister}, ``Kerdock
  codes determine unitary 2-designs,'' [Online]. Available:
  https://arxiv.org/abs/1904.07842.

\bibitem{CHJ10}
R.~{Calderbank}, S.~{Howard}, and S.~{Jafarpour}, ``Construction of a large
  class of matrices satisfying a statistical isometry property,'' in \emph{IEEE
  Journal of Selected Topics in Signal Processing, Special Issue on Compressive
  Sensing}, vol.~4, no.~2, 2010, pp. 358--374.

\end{thebibliography}

\end{document}

The \emph{Schubert cell} $\cC_{\cI}$ is the set of all $m\times r$ matrices that have $1$ in \emph{leading} positions $(i_j,j)$, $0$ on the left, right, and above each leading position, and every other entry is free. A subspace $H \in \cG(m,r)$can be thought as the column space of a $m\times r$ matrix $\bH$. After column operations, it will belong to some cell $\cC_{\cI}$, and to emphasize this fact, we will denote it as $\HI$. Let $\widetilde{\bH_{\cI}}$ be such that $(\HI)\T \wHI = 0$. Of course $\cs (\wHI)\in \cG(m,m-r)$. In fact, if we put $\widetilde{\cC_{\cI}}:= \{\wHI\mid \HI \in \cC_{\cI}\}$ we have a bijection between $\{\widetilde{\cC_{\cI}}\}_{|\cI| = r}$ and $\{\cC_{\cI}\}_{|\cI| = m-r}$.  

Each cell has a distinguished element: $\bI_{\cI}\in \cC_{\cI}$ will denote the $m\times r$ matrix that has all the free entries 0. Put $\widetilde{\cI}:= \{1,\ldots,m\}\setminus \cI$. Then $\bI_{\widetilde{\cI}}\in \cC_{\wcI}$. With this notation one easily verifies that 
\begin{equation}
\begin{array}{ccc}\label{e-PI1}
(\bI_{\cI})\T\HI = \bI_r, & (\bI_{\cI})\T\bI_{\wcI} = 0, & \wHI\bI_{\wcI} = \bI_{m-r}.\\
\end{array}
\end{equation} 
In addition, $\HI$ can be completed to an invertible matrix
\begin{equation}\label{e-PI}
    \PI:=\twomat{\HI}{\bI_{\wcI}}\hspace{.0001 in}\in \GL(m;2).
\end{equation}
Then, $\bI_{\cI}$ completed to an invertible matrix as in~\eqref{e-PI} gives rise to a permutation matrix.
Next,~\eqref{e-PI1} along with the default equality  $(\HI)\T \wHI = 0$ implies that
\begin{equation}\label{e-PIT}
    \PI^-\T = \twomat{\bI_{\cI}}{\widetilde{\HI}}.
\end{equation}